\newtheorem{theorem}{Theorem}
\newtheorem{lemma}{Lemma}
\newtheorem{assumption}{Assumption} 
\theoremstyle{remark}
\title{\LARGE \bf
A Robust Neural Control Design for Multi-drone Slung Payload Manipulation with Control Contraction Metrics
}
\author{
  Xinyuan Liang$^{\dagger}$, Longhao Qian$^{\dagger}$, Yi Lok Lo$^{\dagger}$ and Hugh H.T. Liu$^{\dagger}$
  \thanks{$^{\dagger}$The authors are with Flight Systems and Control Lab, Institute for Aerospace Studies, University of Toronto, 4925 Dufferin St, North York, ON M3H 5T5, Canada {\tt\small \{xiny.liang, longhao.qian, enoch.lo\}@mail.utoronto.ca, liu@utias.utoronto.ca}}
}
\begin{document}

\maketitle
\thispagestyle{empty}
\pagestyle{empty}

\begin{abstract}

This paper presents a robust neural control design for a three-drone slung payload transportation system to track a reference path under external disturbances. The control contraction metric (CCM) is used to generate a neural exponentially converging baseline controller while complying with control input saturation constraints. We also incorporate the uncertainty and disturbance estimator (UDE) technique to dynamically compensate for persistent disturbances. The proposed framework yields a modularized design, allowing the controller and estimator to perform their individual tasks and achieve a zero trajectory tracking error if the disturbances meet certain assumptions. The stability and robustness of the complete system, incorporating both the CCM controller and the UDE compensator, are presented. Simulations are conducted to demonstrate the capability of the proposed control design to follow complicated trajectories under external disturbances.

\end{abstract}

\section{INTRODUCTION}

Modern developments in cable-suspended payload transportation using multirotors present various challenges related to system performance, stability, and safety. Ref.\cite{qian2019TOIE} proposed an uncertainty and disturbance estimator (UDE)-based technique for such a slung payload task using a single-drone design. However, compared to a single-agent slung payload system, a multi-drone design offers a more scalable solution with better range, higher payload capacity, additional redundancy, and provides improved localization accuracy thanks to increased sensor data \cite{maza2009multi}. Various improvements have been made for the proposed multi-drone payload scheme \cite{lee2017geometric, li2021cooperative, cai2024robust, costantini2025cooperative, wahba2024efficient, zhang2021self, goodman2023geometric, zhao2023composite}.

It is difficult to prove the stability of the multi-drone slung load system despite the successful simulation results due to its high-dimensional coupling characteristics \cite{maza2009multi}, \cite{cai2024robust}, and underactuated dynamics \cite{lee2017geometric}.  To address this problem, Qian and Liu \cite{qian2019path} designed a two-loop control and tracking scheme that includes an outer loop robust controller for trajectory tracking and an inner loop attitude tracker on each drone, which follows the attitude commands from the outer loop controller. Later, they proved that the overall system was Lyapunov stable \cite{qian2022robust}. They also improved the design by adding a UDE to the outer loop. Both experiments and simulations of path-following tasks with disturbances were conducted to showcase the real-world implementation capabilities. Cai \textit{et al.} \cite{cai2024robust} also used a similar hierarchical controller design and achieved Lyapunov stability, with simulations showing position convergence and attitude stabilization. Directly proving stability is also possible with multiple assumptions; Lee \cite{lee2017geometric} successfully demonstrated stability using the designed geometric controller and simplified dynamics, with simulations demonstrating the ability of this controller to stabilize with bounded tracking error. Furthermore, Gao \textit{et al.} \cite{gao2025robustness} recently proved the stability of a neuro-geometric controller for a centralized 3-drone transportation system.

\begin{figure}
    \centering{%
        \includegraphics[width=1.0\linewidth]{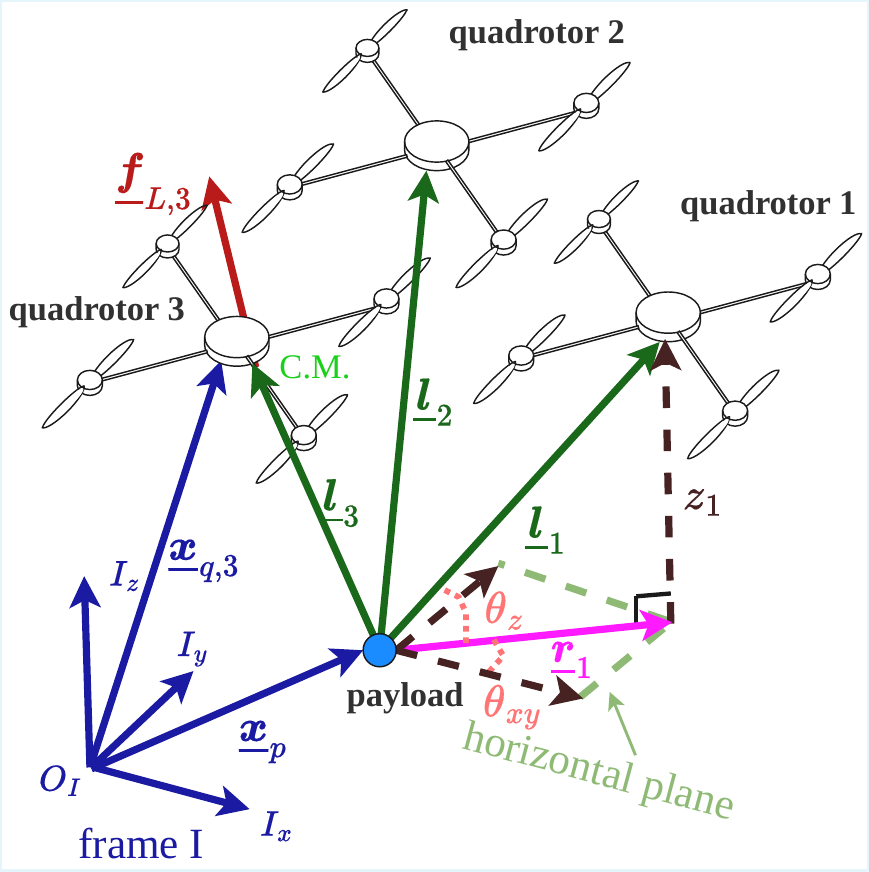}%
    }
    \caption{System geometry.}
    \label{fig: system_config}
    \vspace{-0.3cm}
\end{figure}

The complexity of the multi-drone slung payload system makes controller design challenging from a traditional control Lyapunov function (CLF) approach. Around 1998, the concept of control contraction metric (CCM) for trajectory tracking problems was proposed in \cite{lohmiller1998contraction}. Multiple studies since then have yielded a new control method using CCM on nonlinear systems \cite{manchester2017control}. The rapid development of deep learning has forged a new approach to find such a contraction metric and controller through a neural network \cite{sun2021learning}. Many advancements focus on realizing robustness has been addressed using such CCM controller design \cite{li2025neural, manchester2018robust, tsukamoto2020robust, zhao2022tube, lapandi2024meta, jin2025enhanced}. Detailed descriptions of neural CCM (N-CCM) can be found in \cite{tsukamoto2021contraction, tsukamoto2021theoretical}. However, only simplified low-dimensional cases were tested in \cite{sun2021learning}, while high-dimensional nonlinear systems may fail, such as our multi-drone payload system. On the other hand, many safety considerations were addressed in \cite{dawson2023safe}, but control saturation remains a challenge.

In this paper, we propose a robust non-linear control scheme using N-CCM for a three-drone point-mass-slung payload system. The dynamic model is derived using Kane's method. A CCM-based controller is constructed as in \cite{sun2021learning}, with a control saturation to satisfy the control constraint. The contributions and novelty of the paper are listed as follows.
\begin{enumerate}
    \item An exponentially converging controller for the multi-quadrotor slung-load system is obtained by using N-CCM. Compared with previous work \cite{qian2019path} on slung-load control, our strategy naturally inherits bounded control output to satisfy control saturation constraints while guaranteeing the stability of the system.
    \item An UDE derived from the results in \cite{qian2019path} to compensate for persistent external disturbances. We show that the UDE compensator provides a bounded and converging disturbance estimation error.
    \item The proposed controller scheme is fully modularized. By combining the classic UDE and attitude tracker adopted from \cite{qian2019path} and \cite{roza2014class} with the CCM-based baseline controller, we show that the complete closed-loop system is stable and robust.
\end{enumerate}
The rest of the paper is structured as follows. Section \ref{sec: problem_formulation} describes the dynamics and control problem. Section \ref{sec: ccm_control} states the framework of the CCM-based baseline controller. Section \ref{UDE section} and \ref{eq: attitude_tracker} provide the UDE and the attitude tracking law design. Section \ref{sec: stability proof} analyses the full-system stability. Section \ref{sec: simulation_verification} shows simulation verifications of the proposed control framework. Finally, Section \ref{sec: con} concludes the paper. 

\section{Problem Formulation}\label{sec: problem_formulation}

\subsection{Mathematical Preliminaries}
A vector is denoted as $\underline{\boldsymbol{x}}$, with $\underline{\boldsymbol{x}}_a$ as to reference $a$. Lowercase letters (i.e. $\theta$) are scalars. The identity matrix and the zero matrix are denoted as $\boldsymbol{1}$ and $\boldsymbol{0}$. Matrices are uppercase bold letters. $\boldsymbol{A} \in \mathbb{R}^{n \times m}$ denotes a $n\times m$ real matrix. The inner product of two vectors is denoted as $c = \underline{\boldsymbol{a}}^T \underline{\boldsymbol{b}}$. For $\underline{\boldsymbol{a}} \in \mathbb{R}^{n \times 1}$, $||\underline{\boldsymbol{a}}|| = \sqrt{\underline{\boldsymbol{a}}^T\underline{\boldsymbol{a}}}$. Let $\underline{\boldsymbol{\phi}} \in \mathbb{R}^{3 \times 1} = \begin{bmatrix}
\phi_1 & \phi_2 & \phi_3
\end{bmatrix}^T$ be a vector, a skew-symmetric matrix $\underline{\boldsymbol{\phi}}^{\times}$ is defined as:
\vspace{-0.1cm}
\begin{equation}\label{eq:cross_matrix}
\underline{\boldsymbol{\phi}}^{\times}:=
\begin{bmatrix}
0 & -\phi_3 & \phi_2\\
\phi_3 & 0 & -\phi_1\\
-\phi_2 & \phi_1 & 0\\
\end{bmatrix}.
\vspace{-0.1cm}
\end{equation}
Similarly, given a skew-symmetric matrix $\boldsymbol{S} = -\boldsymbol{S}^T \in \mathbb{R}^{3 \times 3}$, we denote $\boldsymbol{S}^{\vee} = \begin{bmatrix}
S_{32}& S_{13} & S_{21}
\end{bmatrix}^T$. The symmetric part of a square matrix $\boldsymbol{A}$ is denoted as $sym(\boldsymbol{A}) = \frac{\boldsymbol{A} + \boldsymbol{A}^T}{2} $. $\boldsymbol{A}_{ann}$ is the annihilator matrix of $\boldsymbol{A}$ such that $\boldsymbol{A}_{ann}^T \,\boldsymbol{A} = \boldsymbol{0}$. Matrix inequalities are denoted by curly arrows, where $\boldsymbol{A}\prec\boldsymbol{0}$ indicates that $\boldsymbol{A}$ is strictly negative definite. $diag(\boldsymbol{A}_k)$, $vstack(\boldsymbol{A}_k)$ and $hstack(\boldsymbol{A}_k)$ represents diagonal, vertical and horizontal concatenation of matrix $\boldsymbol{A}_k$ for $k=1,2,...$. The vectors $\underline{\boldsymbol{e}}_i$, for $i=1,2,3$, represent standard Euclidean basis vectors.

\subsection{System Dynamics}
According to the system geometry in Fig.\ref{fig: system_config}, a point-mass slung payload with mass $m_p$ is carried by three quadrotors with position $\underline{\boldsymbol{x}}_p$ in the inertial frame, each producing a three-dimensional (3D) lift force $\underline{\boldsymbol{f}}_{L,j}$. The mass of each quadrotor is $m_j$ with position $\underline{\boldsymbol{x}}_{q,j}$ in the inertial frame, $j = 1,2,3$. The cables are attached at the center of mass of the quadrotors such that the attitude dynamics of the quadrotors are decoupled from the payload dynamics. The cable vector defined in frame I (inertial frame) is $\underline{\boldsymbol{l}}_j \in \mathbb{R}^3$, with equal length $l = ||\underline{\boldsymbol{l}}_j||$. Each cable forms a horizontal projection $\underline{\boldsymbol{r}}_j$, the vertical and horizontal angles to this projection are $\theta_{z}$ and $\theta_{xy}$. The cable vectors can be separated into horizontal (x-y plane as $x_j$ and $y_j$ coordinates) and z-axis as follows:
\begin{equation}\label{cable vector}
\underline{\boldsymbol{l}}_j = \begin{bmatrix}
\underline{\boldsymbol{r}}_j\\
\sqrt{l^2 -\underline{\boldsymbol{r}}_j^T\underline{\boldsymbol{r}}_j}
\end{bmatrix}, \text{ } \underline{\boldsymbol{r}}_j = \begin{bmatrix}
x_j\\
y_j
\end{bmatrix},
\end{equation}

We let $z_j = \sqrt{l^2 - \underline{\boldsymbol{r}}_j^T \underline{\boldsymbol{r}}_j}$. The time derivative of the cable vector and an auxiliary matrix $\boldsymbol{B}_j$ are given below:
\begin{equation}\label{l_dot and B_j}
\dot{\underline{\boldsymbol{l}}}_j = \begin{bmatrix}

\underline{\boldsymbol{v}}_j\vspace{0.05cm}\\ 
-\frac{\underline{\boldsymbol{r}}_j^T \underline{\boldsymbol{v}}_j}{z_j}
\end{bmatrix} = \begin{bmatrix}
\boldsymbol{1}_{2\times 2}\\
-\frac{\underline{\boldsymbol{r}}_j^T}{z_j}
\end{bmatrix} \underline{\boldsymbol{v}}_j = \boldsymbol{B}_j \underline{\boldsymbol{v}}_j,
\end{equation}
with $\underline{\boldsymbol{v}}_j$ as the cable velocity in the x-y plane. It is trivial to verify the following relation:
\vspace{-0.02cm}
\begin{equation}\label{eq: B_j_column}
\vspace{-0.02cm}
\boldsymbol{B}_j^T \underline{\boldsymbol{l}}_j = \boldsymbol{0}.   
\end{equation}

Hence, the columns of $\boldsymbol{B}_j$ are perpendicular to the vector $\underline{\boldsymbol{l}}_j$. The detailed derivation of our system dynamics (i.e., the inertial matrix $\boldsymbol{M}$, the gyroscopic matrix $\boldsymbol{C}$, payload gravitational force $\underline{\boldsymbol{f}}_{g,p}$, control matrix $\boldsymbol{H}$, and disturbance matrix $\boldsymbol{H}_{\delta}$) can be found in Sec. 1 of the support document\footnote{Support document at \url{https://github.com/maxl-xy/ACC2026}.\label{git}} using Kane's method. We can compensate for the quadrotor's weight by setting $\underline{\boldsymbol{f}}_{L,j} = -m_j \underline{\boldsymbol{g}}_I + \delta \underline{\boldsymbol{f}}_{L,j}$,
such that the control signal $\delta \underline{\boldsymbol{f}}_{L,j}$ already counters the gravity on the quadrotors. The total payload system with velocity vector $\underline{\boldsymbol{u}} = \begin{bmatrix}
\underline{\boldsymbol{v}}_p^T & \underline{\boldsymbol{v}}_1^T & \underline{\boldsymbol{v}}_2^T &
\underline{\boldsymbol{v}}_3^T
\end{bmatrix}^T$ and the full state $\underline{\boldsymbol{x}} = \begin{bmatrix}
\underline{\boldsymbol{x}}_p^T & \underline{\boldsymbol{r}}_1^T & \underline{\boldsymbol{r}}_2^T & \underline{\boldsymbol{r}}_3^T & 
\underline{\boldsymbol{u}}^T
\end{bmatrix}^T$ is defined as follows:
\begin{equation}\label{eq: system_model}
\Sigma_p : \left\{\begin{tabular}{ l c r }
  $ \boldsymbol{M}\underline{\boldsymbol{\dot{u}}}+\boldsymbol{C}\underline{\boldsymbol{u}}= \underline{\boldsymbol{f}}_{g,p} + \boldsymbol{H} \underline{\boldsymbol{\zeta}} + \boldsymbol{H}_{\delta}\underline{\boldsymbol{\delta}}$ \\
  $\underline{\boldsymbol{\dot{x}}}_p = \underline{\boldsymbol{v}}_p$ \\
  $\underline{\boldsymbol{\dot{r}}}_j =  \underline{\boldsymbol{v}}_j$
\end{tabular}\right.
\end{equation}
where $\underline{\boldsymbol{v}}_p$ is the payload velocity, $\underline{\boldsymbol{\zeta}}$ is the control input, and $\underline{\boldsymbol{\delta}}$ is the disturbance vector. Given that the total mass of the system is $m_t = m_p + m_1 + m_2 + m_3$, the system matrices are the following:
\begin{equation}\label{eq: system matrices}
\begin{aligned}
& \boldsymbol{M} = \begin{bmatrix}
m_{t} \boldsymbol{1}_{3\times 3} & hstack(m_j \boldsymbol{B}_j) \\
vstack(m_j \boldsymbol{B}_j^T) & diag(m_j \boldsymbol{B}_j^T \boldsymbol{B}_j)
\end{bmatrix},\\
& \boldsymbol{C} = \begin{bmatrix}
\boldsymbol{0}_{3\times 3} & hstack(m_j \dot{\boldsymbol{B}}_j) \\
\boldsymbol{0}_{6\times 3} & diag(m_j \boldsymbol{B}_j^T \dot{\boldsymbol{B}}_j)
\end{bmatrix}, \quad
\underline{\boldsymbol{f}}_{g,p} = \begin{bmatrix}
m_p \underline{\boldsymbol{g}}_I\\
\boldsymbol{0}_{6\times 1}
\end{bmatrix},\\
& \boldsymbol{H} = \begin{bmatrix}
hstack(\boldsymbol{1}_{3\times 3})\\
diag(\boldsymbol{B}_j^T)
\end{bmatrix},\quad \underline{\boldsymbol{\zeta}} = \begin{bmatrix}
vstack(\delta \underline{\boldsymbol{f}}_{L,j})
\end{bmatrix}, \\
& \boldsymbol{H}_{\delta} = \begin{bmatrix}
\boldsymbol{1}_{3\times 3} & hstack(\boldsymbol{1}_{3\times 3})\\
\boldsymbol{0}_{6\times 3} & diag(\boldsymbol{B}_j^T)
\end{bmatrix},\quad \underline{\boldsymbol{\delta}} = \begin{bmatrix}
\underline{\boldsymbol{\delta}}_p\\
vstack(\underline{\boldsymbol{\delta}}_j)
\end{bmatrix}. 
\end{aligned}
\end{equation}
After this manipulation, we calculated that:
\begin{equation}
\begin{aligned}
& \boldsymbol{G}(\underline{\boldsymbol{x}}) = \begin{bmatrix}
\boldsymbol{0}_{9\times 9}\\
\boldsymbol{M}^{-1} \boldsymbol{H}
\end{bmatrix},\quad \boldsymbol{G}_{\delta}(\underline{\boldsymbol{x}}) = \begin{bmatrix}
\boldsymbol{0}_{9\times 12}\\
\boldsymbol{M}^{-1} \boldsymbol{H}_{\delta}
\end{bmatrix},\\
& \underline{\boldsymbol{f}}(\underline{\boldsymbol{x}}) = \begin{bmatrix}
\underline{\boldsymbol{u}}\\
\boldsymbol{M}^{-1} \left( \underline{\boldsymbol{f}}_{g,p} - \boldsymbol{C} \underline{\boldsymbol{u}} \right)
\end{bmatrix}.
\end{aligned}
\end{equation}
The final control-affine system is given below:
\begin{equation}\label{eq: ctrl affine_noise}
\dot{\underline{\boldsymbol{x}}} = \underline{\boldsymbol{f}}(\underline{\boldsymbol{x}}) + \boldsymbol{G}(\underline{\boldsymbol{x}}) \underline{\boldsymbol{\zeta}} + \boldsymbol{G}_{\delta}(\underline{\boldsymbol{x}}) \underline{\boldsymbol{\delta}}
\end{equation}

The goal of this paper is to design a feedback controller $\underline{\boldsymbol{\zeta}}$ such that the states $\underline{\boldsymbol{x}}$ initialized in the neighborhood of $\underline{\boldsymbol{x}}^*$ converges to the reference under external disturbance $\underline{\boldsymbol{\delta}}$ while $\underline{\boldsymbol{\zeta}}$ is bounded by some control saturation constraints.

\section{Neural Robust Control with CCM} \label{sec: ccm_control}
The CCM-based control law is realized by training neural networks representing $ \underline{\boldsymbol{\zeta}}_{nn}$ and $\boldsymbol{P}(\underline{\boldsymbol{x}}, t)$ to satisfy differential stability conditions simultaneously. We adopt the framework presented in \cite{manchester2017control} for training. During training, we assume zero external disturbance, i.e. $\underline{\boldsymbol{\delta}} = \underline{\boldsymbol{0}}$. The external disturbance is compensated later by a UDE introduced in Section \ref{UDE section}. Hence, the control-affine model for training is:
\begin{equation}\label{eq: ctrl_affine}
\underline{\dot{\boldsymbol{x}}} = \underline{\boldsymbol{f}}(\underline{\boldsymbol{x}}) + \boldsymbol{G}(\underline{\boldsymbol{x}}) \underline{\boldsymbol{\zeta}}_{nn}, 
\end{equation}
where $\underline{\boldsymbol{x}} \in \mathbb{R}^n$ is the state and $\underline{\boldsymbol{\zeta}}_{nn} \in \mathbb{R}^m$ is the control input. A smooth control law can be found as
\begin{equation}\label{neural control signal}
\underline{\boldsymbol{\zeta}}_{nn} = \underline{\boldsymbol{k}}(\underline{\boldsymbol{x}}, \underline{\boldsymbol{x}}^*, \underline{\boldsymbol{k}}^*; \theta_{K_1}, \theta_{K_2}),
\end{equation}
such that $\underline{\boldsymbol{x}}^*$ and $\underline{\boldsymbol{k}}^*$ are the bounded desired state and control signal. $\theta_{K_1}$ and $\theta_{K_2}$ are learned parameters from two fully connected neural networks $\boldsymbol{K}_1$ and $\boldsymbol{K}_2$.  We choose $\underline{\boldsymbol{k}} = \boldsymbol{K}_2 \text{ }tanh(\boldsymbol{K}_1 (\underline{\boldsymbol{x}} - \underline{\boldsymbol{x}}^*)) + \underline{\boldsymbol{k}}^*$, where $tanh(\cdot)$ is the element-wise hyperbolic tangent function, such that when $\underline{\boldsymbol{x}} \rightarrow \underline{\boldsymbol{x}}^*$, $\underline{\boldsymbol{k}} \rightarrow \underline{\boldsymbol{k}}^*$. 
We also choose $\boldsymbol{P} = \boldsymbol{W}^{-1}$ where $\boldsymbol{W}$ is a dual metric of the CCM defined as $\boldsymbol{W} = \boldsymbol{L}(\underline{\boldsymbol{x}}, \theta_{L})^T \boldsymbol{L}(\underline{\boldsymbol{x}}, \theta_L) + \underline{w}\boldsymbol{1}$. $\theta_L$ are learned parameters from neural network $\boldsymbol{L}$, and $\underline{w}$ is a positive constant that represents the smallest eigenvalue of the dual metric. Note that the CCM is only a function of $\underline{\boldsymbol{x}}$ as the system dynamics are time-independent. Such an approach was used in \cite{sun2021learning} to prove the global stability, and the trajectories contract exponentially with rate $\lambda>0$ if the following contraction conditions are satisfied:
\begin{equation}\label{contraction condition}
\dot{\boldsymbol{P}} + sym \Big(\boldsymbol{P}(\boldsymbol{A} + \boldsymbol{G}(\underline{\boldsymbol{x}})  \boldsymbol{K}) \Big) + 2\lambda \boldsymbol{P} \prec \boldsymbol{0},
\end{equation}
\begin{equation}\label{contraction condition bound}
    \boldsymbol{W}-\overline{w}\cdot \boldsymbol{1} \prec \boldsymbol{0},
\end{equation}
where $\boldsymbol{A} = \frac{\partial \underline{\boldsymbol{f}}}{\partial \underline{\boldsymbol{x}}} + \sum_{i=1}^m \frac{\partial \underline{\boldsymbol{g}}_i}{\partial \underline{\boldsymbol{x}}} \zeta_{nn,i}$, $\boldsymbol{K} = \frac{\partial \underline{\boldsymbol{k}}}{\partial \underline{\boldsymbol{x}}}$, and $\underline{\boldsymbol{g}}_i$ is the $i^{th}$ column of $\boldsymbol{G}$, $\zeta_{nn,i}$ is the $i^{th}$ element of $\underline{\boldsymbol{\zeta}}_{nn}$. $\overline{w}$ is a positive constant that represents the largest eigenvalue of $\boldsymbol{W}$. The authors in \cite{sun2021learning} also incorporated the dual conditions.
\begin{equation}\label{dual condition 1}
\boldsymbol{G}_{ann}^T \left( -\frac{\partial \boldsymbol{W}}{\partial \underline{\boldsymbol{x}}}\underline{\boldsymbol{f}} + sym \Big(\frac{\partial \underline{\boldsymbol{f}}}{\partial \underline{\boldsymbol{x}}} \boldsymbol{W} \Big) + 2\lambda \boldsymbol{W} \right) \boldsymbol{G}_{ann} \prec \boldsymbol{0},
\end{equation}
\begin{equation}\label{dual condition 2}
\boldsymbol{G}_{ann}^T \left(  \frac{\partial \boldsymbol{W}}{\partial \underline{\boldsymbol{x}}}\underline{\boldsymbol{g}}_i - sym \Big(\frac{\partial \underline{\boldsymbol{g}}_i}{\partial \underline{\boldsymbol{x}}} \boldsymbol{W} \Big) \right) \boldsymbol{G}_{ann} = \boldsymbol{0}, i = 1,...,m.
\end{equation}
The dual metric $\boldsymbol{W}$ and the controller are trained separately using fully connected neural networks, with conditions \eqref{contraction condition}, \eqref{contraction condition bound}, \eqref{dual condition 1}, and \eqref{dual condition 2} as loss terms. To add control constraints to the neural controller, we use a saturation function $tanh(\cdot)$ at the end of the neural calculation of $\underline{\boldsymbol{k}}$ in \eqref{neural control signal}, with a saturation factor $a$, and a control bound $f_b$ to tune the domain and range of the control signal from the neural network. The reference control signal $\underline{\boldsymbol{k}}^*$ is outside of the saturation function to guarantee the desired state and control. The output control signal after the hard control constraints is
\begin{equation}\label{saturated control signal feedback}
\underline{\boldsymbol{\zeta}}_{nn,sat} = tanh(a \cdot \underline{\boldsymbol{k}}(\underline{\boldsymbol{x}}, \underline{\boldsymbol{x}}^*, \underline{\boldsymbol{0}}; \theta_{K_1}, \theta_{K_2})) \cdot f_b + \underline{\boldsymbol{k}}^*.
\end{equation}
This ensures the smoothness of $\underline{\boldsymbol{k}}$ even after saturation, while guaranteeing the desired control signal $\underline{\boldsymbol{k}}^*$.

\section{The Uncertainty and Disturbance Estimator}\label{UDE section}
\subsection{Effective Disturbances}

We decompose the disturbances on each quadrotor into two components: $\underline{\boldsymbol{\delta}}_{\bot,j}$ and $\underline{\boldsymbol{\delta}}_{\parallel,j}$ which are the components of $\underline{\boldsymbol{\delta}}_j$ that are perpendicular and parallel to $\underline{\boldsymbol{l}}_j$, respectively. $\underline{\boldsymbol{\delta}}_{T}$ is defined as the effective disturbance on the payload. These disturbances are obtained in the following way:
\begin{equation} \label{eq: delta_bot_para}
\begin{aligned}
& \left\{\begin{tabular}{ l c r }
  $ \underline{\boldsymbol{\delta}}_{\parallel,j} = \underline{\boldsymbol{l}}_j \underline{\boldsymbol{l}}_j^T\underline{\boldsymbol{\delta}}_j/l^2 $ \\
  $ \underline{\boldsymbol{\delta}}_{\bot,j} = \underline{\boldsymbol{\delta}}_j - \underline{\boldsymbol{\delta}}_{\parallel,j}$ \\
\end{tabular}\right.;\quad \underline{\boldsymbol{\delta}}_{T} = \underline{\boldsymbol{\delta}}_p + \sum^{N}_{j=1} \underline{\boldsymbol{\delta}}_{\parallel,j}
\end{aligned}
\end{equation}

The estimated values of $\underline{\boldsymbol{\delta}}_{j}$ and $\underline{\boldsymbol{\delta}}_{T}$ are $\underline{\boldsymbol{\hat{\delta}}}_{j}$ and $\underline{\boldsymbol{\hat{\delta}}}_{T}$, respectively. The estimation errors are $\underline{\boldsymbol{\tilde{\delta}}}_{j} = \underline{\boldsymbol{\hat{\delta}}}_{j} - \underline{\boldsymbol{\delta}}_{j}$ and $\underline{\boldsymbol{\tilde{\delta}}}_{T} = \underline{\boldsymbol{\hat{\delta}}}_{T} - \underline{\boldsymbol{\delta}}_{T}$.

\begin{assumption}\label{assm: disturbance_assm}
All disturbances are bounded. $\boldsymbol{\dot{\delta}}_T \approx \boldsymbol{0}$ and $\boldsymbol{\dot{\delta}}_j \approx \boldsymbol{0}$ are assumed as reasonable engineering treatments near hover in near-calm winds for a typical robust control design \cite{qian2022robust}. The following identities are used in the subsequent stability analysis:

\begin{equation}\label{eq: effective_disturbances_identity}
    \underline{\boldsymbol{\delta}}_p + \sum^{N}_{j=1}  \underline{\boldsymbol{\delta}}_j = \underline{\boldsymbol{\delta}}_{T} + \sum^{N}_{j=1} \underline{\boldsymbol{\delta}}_{\bot,j}.
\end{equation}
\end{assumption}
\subsection{The Disturbance Estimation Law}
The UDE technique in Ref. \cite{qian2022robust} is used to derive the disturbance estimation law. We examine the cable swing dynamics in $\Sigma_p$ in (\ref{eq: system_model}) and (\ref{eq: system matrices}), resulting in the following dynamics for cable acceleration (see Sec. 2 of support document\footnotemark[1] for details):
\begin{equation}\label{eq: quadrotor_relative_motion}
\begin{aligned}
& m_j\boldsymbol{B}_j^T(\underline{\boldsymbol{\dot{v}}}_p  + \boldsymbol{B}_j\underline{\boldsymbol{\dot{v}}}_j + \dot{\boldsymbol{B}}_j\underline{\boldsymbol{v}}_j ) = m_j\boldsymbol{B}_j^T\frac{d \underline{\boldsymbol{v}}_{q, j}}{dt} \\
& = \boldsymbol{B}_j^T(\delta\underline{ \boldsymbol{f}}_{L,j}+ \underline{\boldsymbol{\delta}}_j) = \boldsymbol{B}_j^T(\delta\underline{ \boldsymbol{f}}_{L,j} + \underline{\boldsymbol{\delta}}_{\bot,j}).
\end{aligned}
\end{equation}
The inertial velocity of each quadrotor is $\underline{\boldsymbol{v}}_{q,j} = \underline{\boldsymbol{v}}_p + \boldsymbol{B}_j\underline{\boldsymbol{v}}_j$.  According to (\ref{eq: B_j_column}) and (\ref{eq: delta_bot_para}), we know that $\boldsymbol{B}_j^T\underline{\boldsymbol{\delta}}_{\parallel,j}=\boldsymbol{0}$. Similarly, the estimation value and error of $\underline{\boldsymbol{\delta}}_{\bot,j}$ have the following property: 
\begin{equation}\label{eq: delta_bot_estimation}
\begin{aligned}
& \underline{\boldsymbol{\hat{\delta}}}_{\bot,j} = (\boldsymbol{1}-\underline{\boldsymbol{l}}_j \underline{\boldsymbol{l}}_j^T/l^2) \underline{\boldsymbol{\hat{\delta}}}_j, \quad \underline{\boldsymbol{\tilde{\delta}}}_{\bot,j} = (\boldsymbol{1} - \underline{\boldsymbol{l}}_j \underline{\boldsymbol{l}}_j^T/l^2) \underline{\boldsymbol{\tilde{\delta}}}_j.
\end{aligned}
\end{equation}
$\mathfrak{B}_j = \boldsymbol{B}_j (\boldsymbol{B}_j^T\boldsymbol{B}_j)^{-1}\boldsymbol{B}_j^T$ are a series of auxiliary matrices. The dynamics of the estimator for $\underline{\boldsymbol{\tilde{\delta}}}_{j}$  is set to:
\begin{equation}\label{eq: delta_j_err_dynamics}
\begin{aligned}
\underline{\boldsymbol{\dot{\hat{\delta}}}}_j = \underline{\boldsymbol{\dot{\tilde{\delta}}}}_j = - \kappa_j\mathfrak{B}_j\underline{\boldsymbol{\tilde{\delta}}}_{\bot,j}.
\end{aligned}
\end{equation}
$\kappa_j$ is a positive rate constant. Note that based on the design procedure in \cite{qian2022robust} and Assumption \ref{assm: disturbance_assm}, $\underline{\boldsymbol{\dot{\delta}}}_j \approx \boldsymbol{0}$. Hence, the differential form of the estimated disturbance $\underline{\boldsymbol{\hat{\delta}}}_j$ is:
\begin{equation}\label{eq: hat_delta_j_diff}
\begin{aligned}
& \underline{\boldsymbol{\dot{\hat{\delta}}}}_j = - \kappa_j\mathfrak{B}_j(\underline{\boldsymbol{\hat{\delta}}}_j - \underline{\boldsymbol{\delta}}_j) = \kappa_j\mathfrak{B}_j (m_j\underline{\boldsymbol{\dot{v}}}_{q,j} - \delta \underline{\boldsymbol{f}}_{L,j} - \underline{\boldsymbol{\hat{\delta}}}_j ).\\
\end{aligned}
\end{equation}
The final update law in integral form of $\underline{\boldsymbol{\hat{\delta}}}_{\bot,j}$ is:
\begin{equation}\label{eq: hat_delta_j}
\begin{aligned}
& \underline{\boldsymbol{\hat{\delta}}}_j = \int^{t}_0 \kappa_j \mathfrak{B}_j (m_j\underline{\boldsymbol{\dot{v}}}_{q,j} - \delta \underline{\boldsymbol{f}}_{L,j} - \underline{\boldsymbol{\hat{\delta}}}_j)d\tau,
\end{aligned}
\end{equation}
where $\boldsymbol{\dot{v}}_{q,j}$ is the acceleration of each quadrotor measured by the onboard IMU. It can be calculated using the quadrotor's attitude and the raw acceleration feedback. Here $\delta \underline{\boldsymbol{f}}_{L,j}$ is the actual lift calculated based on the thrust model from system identification and quadrotor attitude. After obtaining $\underline{\boldsymbol{\hat{\delta}}}_{\bot,j}$, we set the error dynamics of $\underline{\boldsymbol{\tilde{\delta}}}_T$ as follows:
\begin{equation}\label{eq: dynamics_delta_T}
\begin{aligned}
& \underline{\boldsymbol{\dot{\tilde{\delta}}}}_T/\lambda_{T} = - \underline{\boldsymbol{\tilde{\delta}}}_T - \sum^{N}_{j=1} \underline{\boldsymbol{\tilde{\delta}}}_{\bot,j} \Rightarrow    - \underline{\boldsymbol{\tilde{\delta}}}_T  = \underline{\boldsymbol{\dot{\tilde{\delta}}}}_T/\lambda_{T} + \sum^{N}_{j=1} \underline{\boldsymbol{\tilde{\delta}}}_{\bot,j}.
\end{aligned}
\end{equation}
$\lambda_T$ is a positive rate constant. For our system, $N=3$ According to Assumption \ref{assm: disturbance_assm}, $ \underline{\boldsymbol{\dot{\delta}}}_T \approx 0$ and $ \underline{\boldsymbol{\delta}}_T = \underline{\boldsymbol{\hat{\delta}}}_T  - \underline{\boldsymbol{\tilde{\delta}}}_T $. Hence $\underline{\boldsymbol{\dot{\tilde{\delta}}}}_T$ has the following relationship:
\begin{equation}\label{eq: multi_force_estimation_dynamics}
\begin{aligned}
&\underline{\boldsymbol{\dot{\tilde{\delta}}}}_T/\lambda_{T}  = (\underline{\boldsymbol{\dot{\hat{\delta}}}}_T - \underline{\boldsymbol{\dot{\delta}}}_T )/\lambda_{T}  = \underline{\boldsymbol{\dot{\hat{\delta}}}}_T/\lambda_{T} \\
\end{aligned}
\end{equation}
We can extract the payload translation dynamics from (\ref{eq: system_model}) and (\ref{eq: system matrices}) as follows (see Sec. 2 of support document\footnotemark[1] for details):
\begin{equation}\label{eq: payload_translation_dynamics}
\begin{aligned}    
& \frac{d}{dt}\big(m_t\underline{\boldsymbol{v}}_p + \sum^{N}_{j=1} m_j \boldsymbol{B}_j \underline{\boldsymbol{v}}_j\big) = \underline{\boldsymbol{\delta}}_T + m_p \underline{\boldsymbol{g}}_I + \sum_{j=1}^{N} (\delta \underline{\boldsymbol{f}}_{L,j} +  \underline{\boldsymbol{\delta}}_{\bot,j}). 
\end{aligned}
\end{equation}
By inserting (\ref{eq: dynamics_delta_T}) and (\ref{eq: multi_force_estimation_dynamics}) into (\ref{eq: payload_translation_dynamics}) and applying $\underline{\boldsymbol{\delta}}_T = \underline{\boldsymbol{\hat{\delta}}}_T  - \underline{\boldsymbol{\tilde{\delta}}}_T$, we have the following update law:
\begin{equation}\label{eq: hat_delta_T_diff}
\begin{aligned}    
& d(m_t\underline{\boldsymbol{v}}_p + \sum^{N}_{j=1} m_j \boldsymbol{B}_j \underline{\boldsymbol{v}}_j)/dt \\
& = \underline{\boldsymbol{\hat{\delta}}}_T  +  \underline{\boldsymbol{\dot{\hat{\delta}}}}_T/\lambda_{T} + m_p \underline{\boldsymbol{g}}_I + \sum_{j=1}^{N} (\delta \underline{\boldsymbol{f}}_{L,j} + \underline{\boldsymbol{\hat{\delta}}}_{\bot,j})\\
& \Rightarrow \underline{\boldsymbol{\dot{\hat{\delta}}}}_T/\lambda_{T} =  d(m_t\underline{\boldsymbol{v}}_p  + \sum^{N}_{j=1}m_j \boldsymbol{B}_j \underline{\boldsymbol{v}}_j)/dt - \underline{\boldsymbol{\hat{\delta}}}_T \\
& - m_p\underline{\boldsymbol{g}}_I - \sum_{j=1}^{N} (\delta \underline{\boldsymbol{f}}_{L,j} + \boldsymbol{\hat{\delta}}_{\bot,j})
\end{aligned}
\end{equation}
It is trivial to verify that the integral form of (\ref{eq: hat_delta_T_diff}) is equivalent to (\ref{eq: detla_T_hat_int_form}). We do not have a measurement of $\boldsymbol{\dot{v}}_p$ because we assume that no IMU is installed on the payload; therefore, the integral form of the above utilizes only velocity feedback to construct the estimation. The final expression of $\boldsymbol{\hat{\delta}}_T$ becomes: 
\begin{equation}\label{eq: detla_T_hat_int_form}
\begin{aligned}
& \underline{\boldsymbol{\hat{\delta}}}_{T} =  \lambda_{T} \Big[m_t\underline{\boldsymbol{v}}_p + \sum^{N}_{j=1} m_j \boldsymbol{B}_j \underline{\boldsymbol{v}}_j\\
& - \int^{t}_0 \sum^{N}_{j=1}( \delta \underline{\boldsymbol{f}}_{L,j} + \underline{\boldsymbol{\hat{\delta}}}_{\bot,j}) + \underline{\boldsymbol{\hat{\delta}}}_{T} + m_p \underline{\boldsymbol{g}}_I d \tau\Big].
\end{aligned}
\end{equation}
Once (\ref{eq: hat_delta_j}) and (\ref{eq: detla_T_hat_int_form}) are obtained, the control force $\underline{\boldsymbol{f}}_{\delta}$ balancing the estimated disturbances can be obtained as:
\begin{equation}\label{eq: delta_T_allocation}
\begin{aligned}
& \underline{\boldsymbol{f}}_{\delta, j} = - \underline{\boldsymbol{n}}_j\hat{\delta}_{T, j} -   \underline{\boldsymbol{\hat{\delta}}}_{\bot,j}, \quad \underline{\boldsymbol{f}}_{\delta} = vstack(\underline{\boldsymbol{f}}_{\delta, j})\\
& \begin{bmatrix}
\hat{\delta}_{T, 1} &
\hat{\delta}_{T, 2} &
\hat{\delta}_{T, 3}
\end{bmatrix}^T = \begin{bmatrix}
     \underline{\boldsymbol{n}}_1, \underline{\boldsymbol{n}}_2, \underline{\boldsymbol{n}}_3
\end{bmatrix}^{-1}\underline{\boldsymbol{\hat{\delta}}}_{T}
\end{aligned}
\end{equation}
where $\underline{\boldsymbol{n}}_j=\underline{\boldsymbol{l}}_j/l$.
Since the cable vectors $\underline{\boldsymbol{l}}_j$ point to different directions generated by the trajectory planner, the linear equation in (\ref{eq: delta_T_allocation}) is guaranteed to provide a unique solution.

\section{The Quadrotor Attitude Control Law}\label{eq: attitude_tracker}
Once $\underline{\boldsymbol{\zeta}}_{nn,sat}$ and $ \underline{\boldsymbol{f}}_{\delta}$ are obtained, we can calculate the total desired control force $\underline{\boldsymbol{\zeta}}_c = \underline{\boldsymbol{\zeta}}_{nn,sat} + \underline{\boldsymbol{f}}_{\delta}$. From Fig.\ref{Complete closed-loop system}, the total desired force for the $j^{th}$ drone is $\underline{\boldsymbol{f}}_{Lc, j}$, we adopt a classic attitude tracker in \cite{roza2014class} to achieve $\underline{\boldsymbol{f}}_{Lc, j}$. The total lift from the propellers is $f_j = ||\underline{\boldsymbol{f}}_{Lc,j}||$. A command yaw angle $\psi$ is picked for each quadrotor. The lift is assumed along the z-axis of the quadrotor, i.e. $\underline{\boldsymbol{n}}_z = \underline{\boldsymbol{f}}_{Lc,j}/f_j$. The reference attitude of the drone $\boldsymbol{R}_{Ij,d}$ is obtained in the following way:
\begin{equation}\label{eq: attitude_extraction}
\begin{aligned}
& \underline{\boldsymbol{\tilde{n}}}_{x} = [
\cos\psi \quad \sin \psi \quad  -(\cos \psi n_{z,x} + \sin \psi n_{z,y})/n_{z,z}]^T; \\
& \underline{\boldsymbol{n}}_{x} = \underline{\boldsymbol{\tilde{n}}}_{x}/||\underline{\boldsymbol{\tilde{n}}}_{x}||; \quad \underline{\boldsymbol{n}}_{y} = \underline{\boldsymbol{n}}_{z}^{\times}\underline{\boldsymbol{n}}_{x}/||\underline{\boldsymbol{n}}_{z}^{\times}\underline{\boldsymbol{n}}_{x}||; \\ & \boldsymbol{R}_{Ij,d} = [
\underline{\boldsymbol{n}}_{x} \quad \underline{\boldsymbol{n}}_{y} \quad \underline{\boldsymbol{n}}_{z}],
\end{aligned}
\end{equation}
where $\underline{\boldsymbol{n}}_{z,x}$ and $\underline{\boldsymbol{n}}_{z,y}$ are the $x$ and $y$ components of $\underline{\boldsymbol{n}}_z$ respectively. We cite Section VI.C of Ref. \cite{roza2014class} to obtain an almost global asymptotically stable (AGAS) attitude tracker. First, define $\underline{\boldsymbol{\omega}}_{d,j}$ as the desired angular velocity, and $\mathcal{\tilde{X}}_{rot,j} = \{\underline{\boldsymbol{\tilde{\omega}}}_j,\tilde{\boldsymbol{R}}_j\}$ as the attitude tracking error of the $j^{th}$ drone. 
Once $\boldsymbol{R}_{Ij,d}$, $\underline{\boldsymbol{\omega}}_{d,j}$, and $\underline{\boldsymbol{\dot{\omega}}}_{d,j}$ are calculated based on $\underline{\boldsymbol{f}}_{Lc,j}$, the following attitude control law is used:
\begin{equation}\label{eq: attitude_control}
\begin{aligned}
&\underline{\boldsymbol{\tau}}_j = -b_{\omega} \underline{\boldsymbol{\tilde{\omega}}}_j - b_r \underline{\boldsymbol{e}}_{r,j} - \underline{\boldsymbol{\tilde{\omega}}}_j^{\times}\boldsymbol{J}\underline{\boldsymbol{\tilde{\omega}}}_j+ \underline{\boldsymbol{\omega}}_j^{\times}\boldsymbol{J}\underline{\boldsymbol{\omega}}_j \\
& - \boldsymbol{J}(\underline{\boldsymbol{\tilde{\omega}}}_j^{\times}\tilde{\boldsymbol{R}}_j^T \underline{\boldsymbol{\omega}}_{d,j}- \tilde{\boldsymbol{R}}_j^T\underline{\boldsymbol{\dot{\omega}}}_{d,j})  \\
\end{aligned}
\end{equation}
where $\tilde{\boldsymbol{R}}_j = \boldsymbol{R}_{Ij,d}^T\boldsymbol{R}_{Ij}$, $\underline{\boldsymbol{\omega}}_{d,j} = (\boldsymbol{R}_{Ij,d}^T\dot{\boldsymbol{R}}_{Ij,d})^{\vee}$, $\underline{\boldsymbol{\tilde{\omega}}}_j = \underline{\boldsymbol{\omega}}_j-\boldsymbol{R}_j^T\underline{\boldsymbol{\omega}}_{d,j} $, and $\underline{\boldsymbol{e}}_{r,j} =\sum_{i=1}^{3}\underline{\boldsymbol{e}}_i^{\times}\tilde{\boldsymbol{R}}_j\underline{\boldsymbol{e}}_i$. $b_{\omega}$ and $b_r$ are positive control gains and $\boldsymbol{J}$ is the moment of inertia of the drones (see Sec. 3 of support document\footnotemark[1] for details). According to Ref. \cite{roza2014class}, we conclude that with the AGAS attitude tracker in (\ref{eq: attitude_control}), $ \underline{\boldsymbol{\zeta}}_e = \underline{\boldsymbol{\zeta}} - \underline{\boldsymbol{\zeta}}_c \rightarrow 0$ as $t \rightarrow \infty$, where $\underline{\boldsymbol{\zeta}} = vstack(\boldsymbol{R}_{Ij} \underline{\boldsymbol{e}}_3 f_j + m_j \underline{\boldsymbol{g}}_I)$.

\begin{figure}
    \centering    \includegraphics[width=1.0\linewidth]{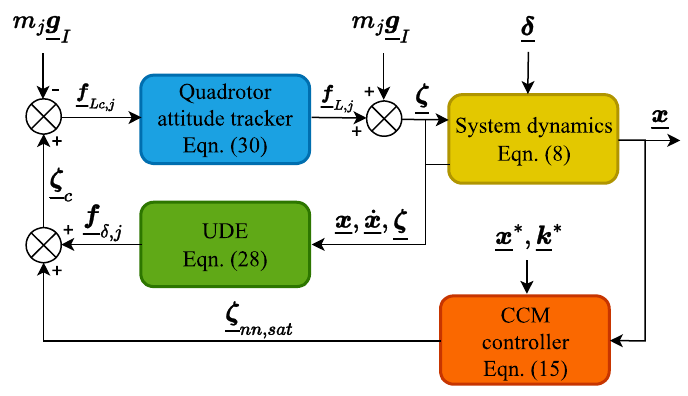}
    \caption{The complete closed-loop system.}
    \label{Complete closed-loop system}
    \vspace{-0.3cm}
\end{figure}

\section{Stability Analysis}\label{sec: stability proof}
First, we cite two important robustness results, stated as:
\begin{theorem}\label{thm: contraction_robustness}
\textbf{Theorem 2.4 of Ref. \cite{tsukamoto2021contraction}}: If the system in (\ref{eq: ctrl_affine}) is contracting, then the path integral $V_\mathcal{L}(\underline{\boldsymbol{q}}, \delta \underline{\boldsymbol{q}}, t) = \int^{\underline{\boldsymbol{\eta}}_1}_{\underline{\boldsymbol{\eta}}_0}||\boldsymbol{\Theta}(\underline{\boldsymbol{q}}, t) \delta \underline{\boldsymbol{q}}||$ of \textbf{(22) of Ref. \cite{tsukamoto2021contraction}}, where ${\underline{\boldsymbol{\eta}}}_0$ is a solution of (\ref{eq: ctrl_affine}) and ${\underline{\boldsymbol{\eta}}}_1$ is a solution of the perturbed system in \textbf{(24) of Ref. \cite{tsukamoto2021contraction}}, and $\underline{\boldsymbol{q}}$ is the virtual state of \textbf{(25) of Ref. \cite{tsukamoto2021contraction}}, exponentially converges to a bounded error ball as long as $\boldsymbol{\Theta}\underline{\boldsymbol{d}} \in \mathcal{L}_{\infty}$. Specifically, if $\exists \,\underline{m},\overline{m} \in \mathbb{R}_{>0}$ and $\exists \, \overline{d} \in \mathbb{R}_{\geq0}$ s.t. $\overline{d}=\sup_{\underline{\boldsymbol{x}},t}||\underline{\boldsymbol{d}}(\underline{\boldsymbol{x}},t)||$ and 
\begin{equation}
    \boldsymbol{1}/\overline{w} = \underline{m} \boldsymbol{1} \preceq \boldsymbol{P} \preceq \overline{m} \boldsymbol{1} = \boldsymbol{1}/\underline{w}
\end{equation}
then we have the following relation:
\begin{equation}
    ||\underline{\boldsymbol{\eta}}_1 - \underline{\boldsymbol{\eta}}_0|| \leq \sqrt{\overline{w}} \,V_{\mathcal{L}(0)}e^{-\lambda t} + \frac{\overline{d}}{\lambda} \sqrt{\frac{\overline{w}}{\underline{w}}}(1-e^{-\lambda t})
\end{equation}
\end{theorem}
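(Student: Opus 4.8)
The plan is to reconstruct the proof of this cited result (Theorem 2.4 of \cite{tsukamoto2021contraction}) by the standard differential (variational) argument of contraction theory: I would bound the contraction rate of a Riemannian length functional, treat the disturbance as an exogenous forcing, and convert the resulting Riemannian estimate back to the Euclidean norm through the uniform metric bounds.

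First I would set up the variational dynamics. Writing the learned closed-loop flow of \eqref{eq: ctrl_affine} as $\dot{\underline{\boldsymbol{\eta}}}_0 = \underline{\boldsymbol{f}}(\underline{\boldsymbol{\eta}}_0)+\boldsymbol{G}(\underline{\boldsymbol{\eta}}_0)\underline{\boldsymbol{\zeta}}_{nn}$ and the perturbed flow as the same field plus $\underline{\boldsymbol{d}}$, I would introduce a smooth path $\underline{\boldsymbol{q}}(s,t)$, $s\in[0,1]$, with $\underline{\boldsymbol{q}}(0,t)=\underline{\boldsymbol{\eta}}_0$ and $\underline{\boldsymbol{q}}(1,t)=\underline{\boldsymbol{\eta}}_1$, so that the infinitesimal displacement $\delta\underline{\boldsymbol{q}}=\partial_s\underline{\boldsymbol{q}}$ obeys $\partial_t\delta\underline{\boldsymbol{q}}=(\boldsymbol{A}+\boldsymbol{G}\boldsymbol{K})\delta\underline{\boldsymbol{q}}+\partial_s\underline{\boldsymbol{d}}$, with $\boldsymbol{A}+\boldsymbol{G}\boldsymbol{K}$ the closed-loop Jacobian appearing in \eqref{contraction condition}. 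Factoring the metric as $\boldsymbol{P}=\boldsymbol{\Theta}^T\boldsymbol{\Theta}$ makes $V_\mathcal{L}=\int_0^1\|\boldsymbol{\Theta}\delta\underline{\boldsymbol{q}}\|\,ds$ the Riemannian length of this path, and the standing hypothesis is that the learned metric renders \eqref{eq: ctrl_affine} contracting, i.e. \eqref{contraction condition}--\eqref{dual condition 2} hold.

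Next I would differentiate $V_\mathcal{L}$ in time. On each segment, $\frac{d}{dt}\big(\delta\underline{\boldsymbol{q}}^T\boldsymbol{P}\delta\underline{\boldsymbol{q}}\big)=\delta\underline{\boldsymbol{q}}^T\big(\dot{\boldsymbol{P}}+\boldsymbol{P}(\boldsymbol{A}+\boldsymbol{G}\boldsymbol{K})+(\boldsymbol{A}+\boldsymbol{G}\boldsymbol{K})^T\boldsymbol{P}\big)\delta\underline{\boldsymbol{q}}+2\,\delta\underline{\boldsymbol{q}}^T\boldsymbol{P}\partial_s\underline{\boldsymbol{d}}$, and \eqref{contraction condition} bounds the first quadratic form above by $-2\lambda\,\|\boldsymbol{\Theta}\delta\underline{\boldsymbol{q}}\|^2$. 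Dividing by $2\|\boldsymbol{\Theta}\delta\underline{\boldsymbol{q}}\|$ and applying Cauchy--Schwarz to the cross term (using $\boldsymbol{P}=\boldsymbol{\Theta}^T\boldsymbol{\Theta}$) gives $\frac{d}{dt}\|\boldsymbol{\Theta}\delta\underline{\boldsymbol{q}}\|\le-\lambda\|\boldsymbol{\Theta}\delta\underline{\boldsymbol{q}}\|+\|\boldsymbol{\Theta}\,\partial_s\underline{\boldsymbol{d}}\|$. Integrating over $s$ and invoking the upper bound $\boldsymbol{P}\preceq(1/\underline{w})\boldsymbol{1}$, so that $\|\boldsymbol{\Theta}\underline{\boldsymbol{d}}\|\le\overline{d}/\sqrt{\underline{w}}$, yields the scalar differential inequality $\dot{V}_\mathcal{L}\le-\lambda V_\mathcal{L}+\overline{d}/\sqrt{\underline{w}}$; here the hypothesis $\boldsymbol{\Theta}\underline{\boldsymbol{d}}\in\mathcal{L}_\infty$ is exactly what keeps the forcing finite.

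Finally I would apply the comparison lemma (Gr\"onwall) to obtain $V_\mathcal{L}(t)\le V_{\mathcal{L}(0)}e^{-\lambda t}+\frac{\overline{d}}{\lambda\sqrt{\underline{w}}}(1-e^{-\lambda t})$, and then convert length to Euclidean distance through the lower bound $\boldsymbol{P}\succeq(1/\overline{w})\boldsymbol{1}$, which gives $\|\underline{\boldsymbol{\eta}}_1-\underline{\boldsymbol{\eta}}_0\|\le\sqrt{\overline{w}}\,V_\mathcal{L}$; multiplying the two estimates reproduces the stated bound with constants $\sqrt{\overline{w}}$ and $\frac{\overline{d}}{\lambda}\sqrt{\overline{w}/\underline{w}}$. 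I expect the main obstacle to be the rigorous treatment of the forcing term $\partial_s\underline{\boldsymbol{d}}$ in the variational formulation: one must choose the interpolating family $\underline{\boldsymbol{q}}(s,t)$ (and justify differentiating under the path integral and interchanging $\partial_s$ and $\partial_t$) so that $\partial_s\underline{\boldsymbol{d}}$ integrates to the actual disturbance while staying uniformly dominated by $\overline{d}$, absorbing the state-dependent Jacobian of $\underline{\boldsymbol{d}}$ along the geodesic; once that bound is secured, the metric-norm conversions and the comparison lemma are routine.
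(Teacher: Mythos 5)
The paper never proves this statement: Theorem~\ref{thm: contraction_robustness} is imported verbatim as Theorem~2.4 of Ref.~\cite{tsukamoto2021contraction}, so the only proof to compare against is the one in that reference, and your reconstruction follows essentially the same route as it does. You differentiate the Riemannian length $V_\mathcal{L}=\int\|\boldsymbol{\Theta}\,\delta\underline{\boldsymbol{q}}\|$, use the contraction condition \eqref{contraction condition} plus Cauchy--Schwarz to get $\dot V_\mathcal{L}\le-\lambda V_\mathcal{L}+\|\boldsymbol{\Theta}\underline{\boldsymbol{d}}\|$, apply the comparison lemma, and convert back to the Euclidean norm via $\boldsymbol{1}/\overline{w}\preceq\boldsymbol{P}\preceq\boldsymbol{1}/\underline{w}$; your constant bookkeeping ($\|\boldsymbol{\Theta}\underline{\boldsymbol{d}}\|\le\overline{d}/\sqrt{\underline{w}}$ and $\|\underline{\boldsymbol{\eta}}_1-\underline{\boldsymbol{\eta}}_0\|\le\sqrt{\overline{w}}\,V_\mathcal{L}$) reproduces the stated bound exactly. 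The one substantive difference is the point you flag as the ``main obstacle,'' and it is worth knowing how the cited proof dissolves it: rather than a generic interpolating family in which $\partial_s\underline{\boldsymbol{d}}$ must be controlled along the path, the reference uses the virtual system of its Eq.~(25), in which the disturbance enters \emph{linearly} in the path parameter $\mu$ and is evaluated at the perturbed trajectory, i.e.\ the virtual dynamics equal the nominal closed-loop field at $\underline{\boldsymbol{q}}(\mu,t)$ plus $\mu\,\underline{\boldsymbol{d}}(\underline{\boldsymbol{\eta}}_1,t)$. Its $\mu=0$ and $\mu=1$ particular solutions are exactly $\underline{\boldsymbol{\eta}}_0$ and $\underline{\boldsymbol{\eta}}_1$, and differentiating in $\mu$ makes the forcing term exactly $\underline{\boldsymbol{d}}(\underline{\boldsymbol{\eta}}_1,t)$, uniformly bounded by $\overline{d}$, so no Jacobian of $\underline{\boldsymbol{d}}$ along the path ever appears and no geodesic regularity argument is needed. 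With that construction substituted for your generic family, your argument closes with no remaining gap.
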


\begin{lemma}\label{lem: B_p}
\textbf{Lemma 1 iv) of Ref. \cite{qian2022robust}}:  The following properties are true:
$\forall \underline{\boldsymbol{x}} \in \mathbb{R}^{3\times1} \neq 0$, we define $\underline{\boldsymbol{x}}_{\bot}$ and $\underline{\boldsymbol{x}}_{\parallel}$ as its components perpendicular and parallel to $\underline{\boldsymbol{l}}_j$. Then $\underline{\boldsymbol{x}}^T\mathfrak{B}_j\underline{\boldsymbol{x}} = \underline{\boldsymbol{x}}_{\bot}^T\underline{\boldsymbol{x}}_{\bot}$.
\end{lemma}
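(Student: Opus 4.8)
The plan is to recognize $\mathfrak{B}_j = \boldsymbol{B}_j(\boldsymbol{B}_j^T\boldsymbol{B}_j)^{-1}\boldsymbol{B}_j^T$ as the orthogonal projector onto the column space of $\boldsymbol{B}_j$, and to show that this column space is precisely the plane perpendicular to $\underline{\boldsymbol{l}}_j$. Once that identification is made, $\mathfrak{B}_j$ simply extracts the perpendicular component of $\underline{\boldsymbol{x}}$, and the claimed identity follows from the symmetry and idempotency of the projector. First I would record the two structural facts about $\mathfrak{B}_j$ that the argument uses: it is symmetric, $\mathfrak{B}_j^T = \mathfrak{B}_j$, and idempotent, $\mathfrak{B}_j^2 = \mathfrak{B}_j$, both of which follow by direct substitution of the definition.

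Next I would split the argument according to the orthogonal decomposition $\underline{\boldsymbol{x}} = \underline{\boldsymbol{x}}_{\parallel} + \underline{\boldsymbol{x}}_{\bot}$, where $\underline{\boldsymbol{x}}_{\parallel} = \underline{\boldsymbol{l}}_j\underline{\boldsymbol{l}}_j^T\underline{\boldsymbol{x}}/l^2$. For the parallel part, \eqref{eq: B_j_column} gives $\boldsymbol{B}_j^T\underline{\boldsymbol{l}}_j = \boldsymbol{0}$, so $\boldsymbol{B}_j^T\underline{\boldsymbol{x}}_{\parallel} = \boldsymbol{0}$ and hence $\mathfrak{B}_j\underline{\boldsymbol{x}}_{\parallel} = \boldsymbol{0}$; the projector annihilates the component along the cable. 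For the perpendicular part I would argue that $\underline{\boldsymbol{x}}_{\bot}$ lies in the column space of $\boldsymbol{B}_j$, so $\underline{\boldsymbol{x}}_{\bot} = \boldsymbol{B}_j\underline{\boldsymbol{y}}$ for some $\underline{\boldsymbol{y}} \in \mathbb{R}^2$, whence $\mathfrak{B}_j\underline{\boldsymbol{x}}_{\bot} = \boldsymbol{B}_j(\boldsymbol{B}_j^T\boldsymbol{B}_j)^{-1}\boldsymbol{B}_j^T\boldsymbol{B}_j\underline{\boldsymbol{y}} = \boldsymbol{B}_j\underline{\boldsymbol{y}} = \underline{\boldsymbol{x}}_{\bot}$. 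Adding the two pieces yields $\mathfrak{B}_j\underline{\boldsymbol{x}} = \underline{\boldsymbol{x}}_{\bot}$, and the quadratic form follows: $\underline{\boldsymbol{x}}^T\mathfrak{B}_j\underline{\boldsymbol{x}} = \underline{\boldsymbol{x}}^T\underline{\boldsymbol{x}}_{\bot} = (\underline{\boldsymbol{x}}_{\parallel}+\underline{\boldsymbol{x}}_{\bot})^T\underline{\boldsymbol{x}}_{\bot} = \underline{\boldsymbol{x}}_{\bot}^T\underline{\boldsymbol{x}}_{\bot}$, since $\underline{\boldsymbol{x}}_{\parallel}^T\underline{\boldsymbol{x}}_{\bot} = 0$. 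Equivalently, using symmetry and idempotency one can write $\underline{\boldsymbol{x}}^T\mathfrak{B}_j\underline{\boldsymbol{x}} = \underline{\boldsymbol{x}}^T\mathfrak{B}_j^T\mathfrak{B}_j\underline{\boldsymbol{x}} = ||\mathfrak{B}_j\underline{\boldsymbol{x}}||^2$ and then invoke $\mathfrak{B}_j\underline{\boldsymbol{x}}=\underline{\boldsymbol{x}}_{\bot}$.

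The step I expect to be the main obstacle is the claim that the column space of $\boldsymbol{B}_j$ coincides exactly with the two-dimensional plane orthogonal to $\underline{\boldsymbol{l}}_j$, i.e. that $\underline{\boldsymbol{x}}_{\bot}$ really is of the form $\boldsymbol{B}_j\underline{\boldsymbol{y}}$. This requires a dimension count: $\boldsymbol{B}_j$ has two columns and full column rank, since its top $2\times 2$ block is the identity from \eqref{l_dot and B_j}, so its column space is two-dimensional; by \eqref{eq: B_j_column} this space is contained in the orthogonal complement of $\underline{\boldsymbol{l}}_j$, which is itself two-dimensional in $\mathbb{R}^3$, so the two subspaces are equal. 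Verifying full column rank and matching the dimensions is the only nontrivial ingredient; everything else is routine projector algebra.
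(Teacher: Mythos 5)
Your proof is correct. Note that the paper itself gives no proof of this lemma---it is imported verbatim as Lemma 1 iv) of Ref.~\cite{qian2022robust}---so there is no internal argument to compare against; your route (identify $\mathfrak{B}_j = \boldsymbol{B}_j(\boldsymbol{B}_j^T\boldsymbol{B}_j)^{-1}\boldsymbol{B}_j^T$ as the orthogonal projector onto the column space of $\boldsymbol{B}_j$, show by the full-column-rank and dimension count that this column space equals the plane orthogonal to $\underline{\boldsymbol{l}}_j$, then use $\boldsymbol{B}_j^T\underline{\boldsymbol{l}}_j=\boldsymbol{0}$, idempotency, symmetry, and $\underline{\boldsymbol{x}}_{\parallel}^T\underline{\boldsymbol{x}}_{\bot}=0$) is the standard projection argument and correctly supplies what the citation leaves implicit, including the one genuinely nontrivial step you flagged: that the containment of $\mathrm{col}(\boldsymbol{B}_j)$ in $\{\underline{\boldsymbol{l}}_j\}^{\perp}$ is in fact an equality.
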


Then we state the main stability result of this paper:

\begin{theorem}\label{thm: stability_complete_system}
For the system in (\ref{eq: ctrl affine_noise}) with the proposed control law shown in Fig.\ref{Complete closed-loop system} if the following conditions are met:
\begin{enumerate}
    \item applying the baseline controller with CCM in (\ref{saturated control signal feedback}), 
    \item applying the UDE in (\ref{eq: hat_delta_j}) and (\ref{eq: delta_T_allocation}),
    \item applying  the AGAS tracker in (\ref{eq: attitude_control}),
    \item assumption \ref{assm: disturbance_assm} is satisfied.
\end{enumerate}
then all trajectories of the closed-loop system $\underline{\boldsymbol{\eta}}$  converge to the reference trajectory $\underline{\boldsymbol{\eta}}_0$, i.e. $||\underline{\boldsymbol{\eta}}_1 - \underline{\boldsymbol{\eta}}_0|| \rightarrow 0$ as $t\rightarrow\infty$. In addition, the control force applied to the system is bounded such that $||\underline{\boldsymbol{\zeta}}||\leq \zeta_b$.

\end{theorem}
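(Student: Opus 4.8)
The plan is to exploit the modular cascade structure of Fig.~\ref{Complete closed-loop system}: the UDE estimation error evolves as an (essentially autonomous) subsystem that I will show is convergent; its convergence drives the effective residual disturbance acting on the control-affine model \eqref{eq: ctrl affine_noise} to zero; the CCM baseline controller renders the nominal closed loop contracting so that Theorem~\ref{thm: contraction_robustness} applies; and the AGAS tracker contributes a force error $\underline{\boldsymbol{\zeta}}_e\to\boldsymbol{0}$ that enters as a second vanishing input. A converging-input/converging-state argument then delivers $||\underline{\boldsymbol{\eta}}_1-\underline{\boldsymbol{\eta}}_0||\to 0$.

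First I would prove convergence of the UDE. For each cable take $V_j=\tfrac12\underline{\boldsymbol{\tilde{\delta}}}_j^T\underline{\boldsymbol{\tilde{\delta}}}_j$. Because $\mathfrak{B}_j$ is the orthogonal projector onto the plane perpendicular to $\underline{\boldsymbol{l}}_j$ (the column space of $\boldsymbol{B}_j$, by \eqref{eq: B_j_column}), we have $\mathfrak{B}_j\underline{\boldsymbol{\tilde{\delta}}}_{\bot,j}=\mathfrak{B}_j\underline{\boldsymbol{\tilde{\delta}}}_j$, so along \eqref{eq: delta_j_err_dynamics} and by Lemma~\ref{lem: B_p},
\[
\dot{V}_j=-\kappa_j\,\underline{\boldsymbol{\tilde{\delta}}}_j^T\mathfrak{B}_j\underline{\boldsymbol{\tilde{\delta}}}_j=-\kappa_j\,||\underline{\boldsymbol{\tilde{\delta}}}_{\bot,j}||^2\le 0.
\]
Thus $\underline{\boldsymbol{\tilde{\delta}}}_j$ is bounded and $\int_0^\infty||\underline{\boldsymbol{\tilde{\delta}}}_{\bot,j}||^2\,d\tau<\infty$; since $\underline{\boldsymbol{\dot{\tilde{\delta}}}}_{\bot,j}$ is bounded, Barbalat's lemma yields $\underline{\boldsymbol{\tilde{\delta}}}_{\bot,j}\to\boldsymbol{0}$. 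Equation \eqref{eq: dynamics_delta_T} is the exponentially stable linear system $\underline{\boldsymbol{\dot{\tilde{\delta}}}}_T=-\lambda_T\underline{\boldsymbol{\tilde{\delta}}}_T-\lambda_T\sum_{j=1}^N\underline{\boldsymbol{\tilde{\delta}}}_{\bot,j}$ driven by the vanishing term $\sum_j\underline{\boldsymbol{\tilde{\delta}}}_{\bot,j}$, hence $\underline{\boldsymbol{\tilde{\delta}}}_T\to\boldsymbol{0}$ as well.

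Next I would characterize the residual disturbance and invoke contraction. The allocation \eqref{eq: delta_T_allocation} injects $\underline{\boldsymbol{f}}_\delta$ to cancel the estimates $\underline{\boldsymbol{\hat{\delta}}}_T$ and $\underline{\boldsymbol{\hat{\delta}}}_{\bot,j}$, so by identity \eqref{eq: effective_disturbances_identity} of Assumption~\ref{assm: disturbance_assm} the effective disturbance remaining on the system equals $-\underline{\boldsymbol{\tilde{\delta}}}_T-\sum_{j=1}^N\underline{\boldsymbol{\tilde{\delta}}}_{\bot,j}$, which is bounded for all $t$ and tends to $\boldsymbol{0}$ by the previous step; the AGAS result $\underline{\boldsymbol{\zeta}}_e\to\boldsymbol{0}$ adds the further vanishing input $\boldsymbol{G}(\underline{\boldsymbol{x}})\underline{\boldsymbol{\zeta}}_e$, which I lump into $\underline{\boldsymbol{d}}$. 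Since \eqref{contraction condition}--\eqref{dual condition 2} make \eqref{eq: ctrl_affine} contracting and $\boldsymbol{P}=\boldsymbol{W}^{-1}$ with \eqref{contraction condition bound} supplies the uniform bounds $\underline{m}\boldsymbol{1}\preceq\boldsymbol{P}\preceq\overline{m}\boldsymbol{1}$, Theorem~\ref{thm: contraction_robustness} confines the tracking error to a ball of radius proportional to $\overline{d}$; restarting this exponential bound from an arbitrarily late time (when $\overline{d}$ is arbitrarily small) forces the steady-state term to zero, giving $||\underline{\boldsymbol{\eta}}_1-\underline{\boldsymbol{\eta}}_0||\to 0$. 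Boundedness of the input is immediate: the $tanh$ in \eqref{saturated control signal feedback} bounds $\underline{\boldsymbol{\zeta}}_{nn,sat}$, bounded $\underline{\boldsymbol{\delta}}_j$ (Assumption~\ref{assm: disturbance_assm}) and bounded $\underline{\boldsymbol{\tilde{\delta}}}_j$ (monotonicity of $V_j$) bound $\underline{\boldsymbol{\hat{\delta}}}_j$, and the invertible allocation in \eqref{eq: delta_T_allocation} bounds $\underline{\boldsymbol{f}}_\delta$, so $||\underline{\boldsymbol{\zeta}}||\le\zeta_b$.

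The step I expect to be the main obstacle is upgrading the \emph{bounded-ball} guarantee of Theorem~\ref{thm: contraction_robustness}, stated through the supremal bound $\overline{d}=\sup_{\underline{\boldsymbol{x}},t}||\underline{\boldsymbol{d}}||$, into genuine asymptotic convergence for a disturbance that is merely time-varying and vanishing. This needs the converging-input/converging-state property of contracting (hence ISS) systems rather than a direct substitution, together with a verification that the estimation- and attitude-error subsystems feed the contracting dynamics as a true cascade, i.e.\ that no algebraic loop is created through the measured acceleration $\underline{\boldsymbol{\dot{v}}}_{q,j}$ used in \eqref{eq: hat_delta_j}.
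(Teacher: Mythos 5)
Your first step applies Barbalat's lemma to each cable with the justification ``since $\underline{\boldsymbol{\dot{\tilde{\delta}}}}_{\bot,j}$ is bounded,'' but that boundedness is precisely what you cannot assert at that point, and it is where your premise that the UDE error is an ``essentially autonomous'' subsystem breaks down. From \eqref{eq: delta_bot_estimation}, $\underline{\boldsymbol{\tilde{\delta}}}_{\bot,j} = (\boldsymbol{1} - \underline{\boldsymbol{l}}_j \underline{\boldsymbol{l}}_j^T/l^2)\,\underline{\boldsymbol{\tilde{\delta}}}_j$, so $\underline{\boldsymbol{\dot{\tilde{\delta}}}}_{\bot,j}$ contains the term $-\tfrac{d}{dt}\big(\underline{\boldsymbol{l}}_j \underline{\boldsymbol{l}}_j^T/l^2\big)\underline{\boldsymbol{\tilde{\delta}}}_j$ with $\dot{\underline{\boldsymbol{l}}}_j = \boldsymbol{B}_j \underline{\boldsymbol{v}}_j$; likewise $\mathfrak{B}_j$ in \eqref{eq: delta_j_err_dynamics} depends on the cable direction. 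The estimator error dynamics are therefore driven by plant states (the cable directions and velocities), so this is a feedback interconnection, not a pure cascade, and Barbalat requires closed-loop state boundedness \emph{first}. This is exactly the ordering in the paper's proof: the Lyapunov argument gives bounded $\underline{\boldsymbol{\tilde{\delta}}}_T$, $\underline{\boldsymbol{\tilde{\delta}}}_j$; boundedness of the estimates then makes $\underline{\boldsymbol{d}}$ bounded; Theorem~\ref{thm: contraction_robustness} then bounds the tracking error and hence the state; only then are $\dot{\underline{\boldsymbol{\tilde{\delta}}}}_T$, $\dot{\underline{\boldsymbol{\tilde{\delta}}}}_j$ bounded, $\dot{V}_e$ uniformly continuous, and Barbalat applicable. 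Your proposal contains all of these ingredients (you use the bounded-ball result later for the radius argument), so the gap is repairable by reordering, but as written the convergence step is unjustified. Note also that your closing worry about an algebraic loop through the measured acceleration $\underline{\boldsymbol{\dot{v}}}_{q,j}$ points at the wrong place: the real coupling is the state dependence of the projection matrices.

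Aside from that ordering flaw, your route is genuinely different from the paper's and in some respects cleaner. The paper builds one composite Lyapunov function $V_e$ in \eqref{eq: V_e_definition} with tuned cross-weights $c_T$, $c_j$ so that $\dot{V}_e$ becomes a negative semidefinite quadratic form in $[\underline{\boldsymbol{\tilde{\delta}}}_T;\underline{\boldsymbol{\tilde{\delta}}}_{\bot,j}]$, and extracts $\underline{\boldsymbol{\tilde{\delta}}}_T \rightarrow \underline{\boldsymbol{0}}$ and $\underline{\boldsymbol{\tilde{\delta}}}_{\bot,j} \rightarrow \underline{\boldsymbol{0}}$ from a single application of Barbalat. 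You instead use per-cable storage functions $V_j$ plus the observation that \eqref{eq: dynamics_delta_T} is an exponentially stable linear system driven by the vanishing input $\sum_j \underline{\boldsymbol{\tilde{\delta}}}_{\bot,j}$; this dispenses with choosing the coupling constants altogether. Moreover, your limsup/restart argument for upgrading the bounded-ball estimate of Theorem~\ref{thm: contraction_robustness} (stated via $\overline{d}=\sup_{\underline{\boldsymbol{x}},t}\|\underline{\boldsymbol{d}}\|$) into genuine convergence under a merely vanishing disturbance is more careful than the paper, which performs this step by citing Theorem~\ref{thm: contraction_robustness} without comment; that converging-input/converging-state refinement is a point where your write-up improves on the original.
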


\begin{proof}
First, we analyze the properties of the UDE. A Lyapunov function $V_e$ is defined as follows: 
\begin{equation} \label{eq: V_e_definition}
\begin{aligned}
& V_e = \frac{1}{2}c_T\underline{\boldsymbol{\tilde{\delta}}}_T^T\underline{\boldsymbol{\tilde{\delta}}}_T +  \frac{1}{2}\sum^{N}_{j=1}\Big[ c_T\lambda_T N /(2\kappa_{j})  +   c_j/N  \Big ] \underline{\boldsymbol{\tilde{\delta}}}_j^T\underline{\boldsymbol{\tilde{\delta}}}_j
\end{aligned}
\end{equation}
where $ c_T$, $c_j$ are positive constants. According to the error dynamics in (\ref{eq: delta_j_err_dynamics}) and (\ref{eq: dynamics_delta_T}), the time derivative of $V_e$ is:
\begin{equation}
\begin{aligned}
& \dot{V}_e =  - c_T \lambda_{T} \underline{\boldsymbol{\tilde{\delta}}}_T^T \underline{\boldsymbol{\tilde{\delta}}}_T- c_T \lambda_{T} \sum^{N}_{j=1} \underline{\boldsymbol{\tilde{\delta}}}_{\bot,j}^T\underline{\boldsymbol{\tilde{\delta}}}_T \\ 
& -  \sum^{N}_{j=1}\Big[ c_T\lambda_T N /2  + c_j\kappa_{j}/N  \Big] \underline{\boldsymbol{\tilde{\delta}}}_j^T \mathfrak{B}_j\underline{\boldsymbol{\tilde{\delta}}}_{\bot,j}  
\end{aligned}
\end{equation}
According to (\ref{eq: B_j_column}), $\mathfrak{B}_j \underline{\boldsymbol{l}}_j = \underline{\boldsymbol{0}}$, we have $\underline{\boldsymbol{\tilde{\delta}}}_j^T \mathfrak{B}_j = \underline{\boldsymbol{\tilde{\delta}}}_{\bot,j}^T \mathfrak{B}_j$. Using Lemma \ref{lem: B_p}, we can obtain $\underline{\boldsymbol{\tilde{\delta}}}_j^T \mathfrak{B}_j\underline{\boldsymbol{\tilde{\delta}}}_{\bot,j} = \underline{\boldsymbol{\tilde{\delta}}}_{\bot,j}^T \underline{\boldsymbol{\tilde{\delta}}}_{\bot,j}$. Hence, $ \dot{V}_e$ is:
\begin{equation}\label{eq: v_e_dot}
\begin{aligned}
\begin{aligned}
& \dot{V}_e = - \sum^{N}_{j=1} 
\underline{\boldsymbol{\tilde{z}}}_j^T \begin{bmatrix}
c_T \lambda_T /N \cdot \boldsymbol{1} & c_T  \lambda_T/2  \cdot \boldsymbol{1}\\
c_T  \lambda_T/2  \cdot \boldsymbol{1} & c_T \lambda_T N/2 + c_j\kappa_j/N  \cdot \boldsymbol{1} \\
\end{bmatrix} \underline{\boldsymbol{\tilde{z}}}_j
\end{aligned}
\end{aligned}
\end{equation}
where $\underline{\boldsymbol{\tilde{z}}}_j =[ \underline{\boldsymbol{\tilde{\delta}}}_T^T, \quad\underline{\boldsymbol{\tilde{\delta}}}_{\bot,j}^T]^T $. It is trivial to verify that $\dot{V}_e$ is negative semi-definite. Note that since $V_e$ is a positive definite Lyapunov function, we can conclude that disturbance estimation errors $\underline{\boldsymbol{\tilde{\delta}}}_T$, $\underline{\boldsymbol{\tilde{\delta}}}_{j}$, and $\underline{\boldsymbol{\tilde{\delta}}}_{\bot,j}$ are bounded. According to Assumption \ref{assm: disturbance_assm}, $\underline{\boldsymbol{\hat{\delta}}}_T$ and $\underline{\boldsymbol{\hat{\delta}}}_{\bot,j}$ are bounded. By Theorem \ref{thm: contraction_robustness}, the trajectory errors are bounded by external disturbances $\underline{\boldsymbol{\delta}}_e$ and lift force error $\underline{\boldsymbol{\zeta}}_e$. With the application of (\ref{eq: hat_delta_j}) and (\ref{eq: delta_T_allocation}) the AGAS attitude tracker in (\ref{eq: attitude_control}), $\underline{\boldsymbol{d}}$ is as follows:
\begin{equation}
    \underline{\boldsymbol{d}} =  \boldsymbol{G}(\underline{\boldsymbol{x}}) \underline{\boldsymbol{\zeta}}_e + \boldsymbol{G}_{\delta}(\underline{\boldsymbol{x}}) \underline{\boldsymbol{\delta}}_e
\end{equation}
where $\underline{\boldsymbol{\delta}}_e = - [\underline{\boldsymbol{\tilde{\delta}}}_{T}^T , \, \underline{\boldsymbol{\tilde{\delta}}}_{\bot,1}^T, \, \underline{\boldsymbol{\tilde{\delta}}}_{\bot,2}^T, \, \underline{\boldsymbol{\tilde{\delta}}}_{\bot,3}^T]^T$ (see Sec. 4 of support document\footnotemark[1] for details). Since  $\underline{\boldsymbol{\hat{\delta}}}_T$ and $\underline{\boldsymbol{\hat{\delta}}}_{\bot,j}$ are bounded, and AGAS attitude tracker is used, $\underline{\boldsymbol{d}}$ is bounded, and $ ||\underline{\boldsymbol{\eta}}_1 - \underline{\boldsymbol{\eta}}_0||$ is bounded. Hence, the state of the closed-loop system $\underline{\boldsymbol{x}}$ is bounded. In addition, by using the dynamics of the estimation error in (\ref{eq: delta_j_err_dynamics}) and (\ref{eq: dynamics_delta_T}) together with $\underline{\boldsymbol{x}}$ being bounded, we conclude that $\dot{\underline{\boldsymbol{\tilde{\delta}}}}_T$ and $\dot{\underline{\boldsymbol{\tilde{\delta}}}}_j$ are bounded. Hence,  $\ddot{V}_e$ is bounded and $\dot{V}_e$ is uniformly continuous (see Sec. 5 of support document\footnotemark[1] for details). According to Barbalat's Lemma, $\dot{V}_e \rightarrow 0$ as $t \rightarrow \infty$, and we conclude that $\underline{\boldsymbol{\tilde{\delta}}}_T  \rightarrow \underline{\boldsymbol{0}}$ and $\underline{\boldsymbol{\tilde{\delta}}}_{\bot,j}  \rightarrow \underline{\boldsymbol{0}}$ as $t \rightarrow \infty$. Finally, $ \underline{\boldsymbol{d}} \rightarrow \underline{\boldsymbol{0}} $ as $t \rightarrow \infty$.
 Hence, according to Theorem \ref{thm: contraction_robustness}, $ ||\underline{\boldsymbol{\eta}}_1 - \underline{\boldsymbol{\eta}}_0|| \rightarrow 0$ as $t \rightarrow \infty$.

 Moreover, the magnitude of the control force is bounded as $||\underline{\boldsymbol{\zeta}}|| = ||\underline{\boldsymbol{\zeta}}_c|| = ||\underline{\boldsymbol{\zeta}}_{nn,sat} + \underline{\boldsymbol{f}}_\delta|| \leq f_b + ||\underline{\boldsymbol{k}}^*|| + ||\underline{\boldsymbol{f}}_\delta|| \leq \zeta_b$


\end{proof}
\section{Simulation Verification}\label{sec: simulation_verification}
Our model contains a $1.3$kg point mass payload attached to three drones with inelastic $0.98$m cables; each drone is $1.5$kg. For the reference trajectory, each cable should form a $30^{\circ}$ horizontal angle and a $15^{\circ}$ vertical angle ($\theta_{xy}$, $\theta_z$) with respect to its projection ($\underline{\boldsymbol{r}}_j$) according to Fig.\ref{fig: system_config}. 

\begin{figure}
    \centering
    \begin{subfigure}{0.48\textwidth}
        \centering
        \includegraphics[width=\textwidth]{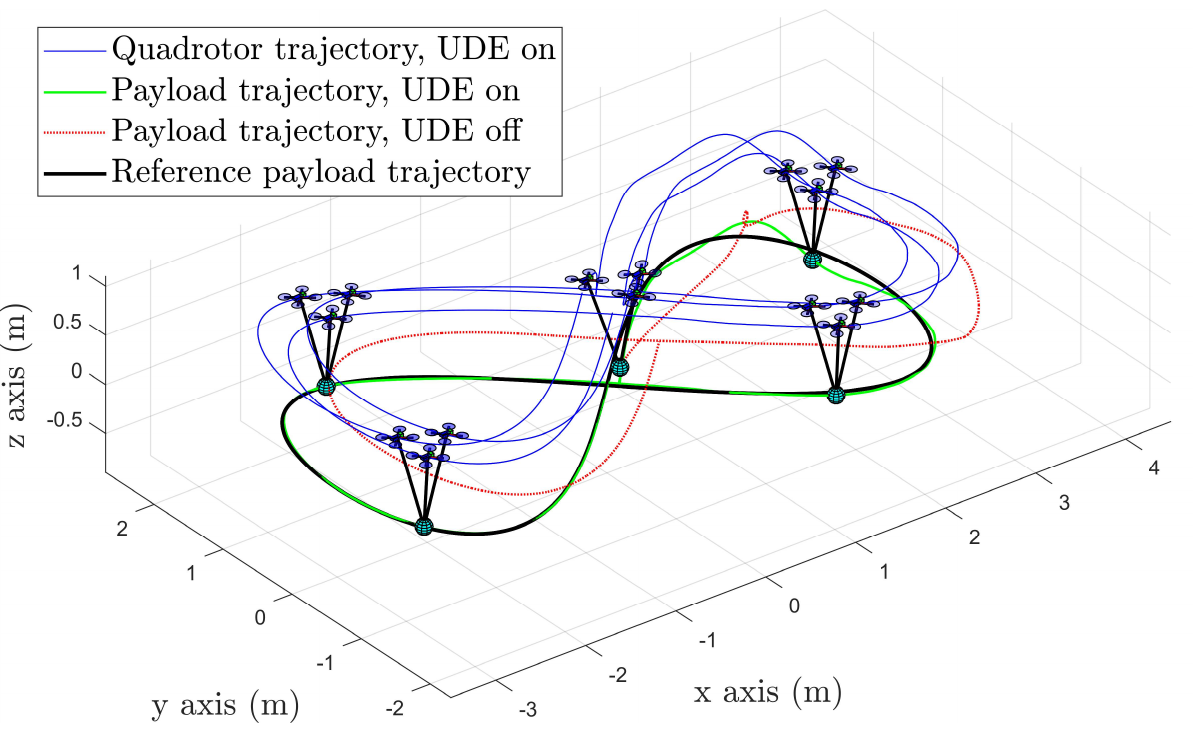}
        \caption{Trajectory tracking results.}
        \label{Figure-8 trajectory}
    \end{subfigure}
    \begin{subfigure}{0.48\textwidth}
        \centering
        \includegraphics[width=\textwidth]{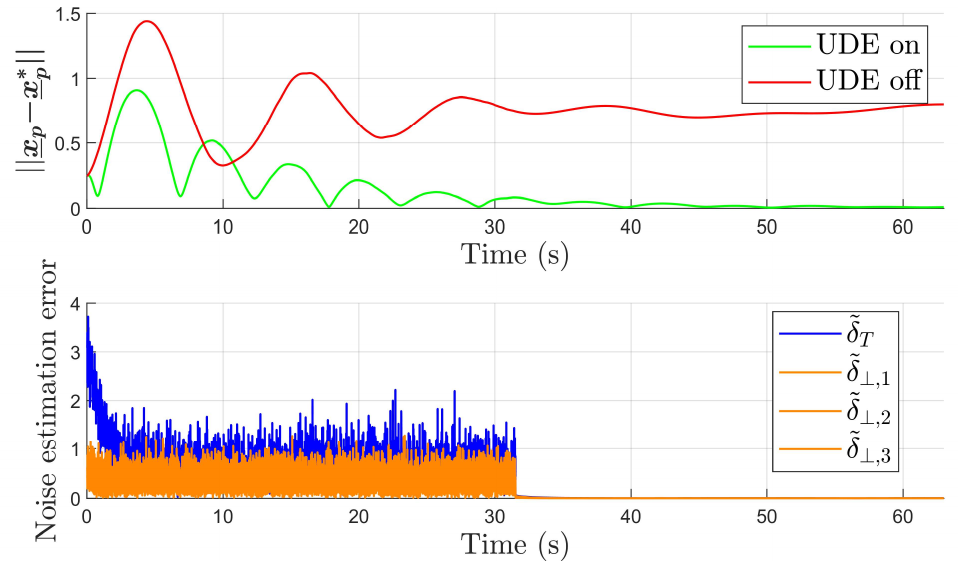}
        \caption{Payload trajectory tracking errors and UDE estimation errors.}
        \label{Figure-8 path UDE error}
    \end{subfigure}
    \caption{Trajectory tracking MATLAB simulation performance plots, (a): comparison with UDE on and off; (b): payload tracking error and noise estimation error with respect to time when using UDE.}
    \label{Figure-8 path MATLAB and Python simulation performance plots}
    \vspace{-0.3cm}
\end{figure}

\subsection{The Training Environment Setup}
We deploy fully connected neural networks for the dual metric $\boldsymbol{W}$ and the controller $\underline{\boldsymbol{k}}$ with randomly sampled datasets. All neural networks have 2 layers with 128 neurons in the hidden layer. The training was executed on the Flight Systems and Control Lab (FSC Lab) server, which is equipped with an RTX 4060 GPU and an Intel i5 CPU.

\subsection{Trajectory Tracking Under External Disturbances}
The performance of the figure-8 trajectory tracking is demonstrated in Fig.\ref{Figure-8 path MATLAB and Python simulation performance plots}. The disturbance force is a summation of constant and stochastic noise $\underline{\boldsymbol{\delta}} = \underline{\boldsymbol{\delta}}_c + \underline{\boldsymbol{\delta}}_s$, where $\underline{\boldsymbol{\delta}}_c = [0.3, -0.2, 0.5, 0.3, ... , 0.3]^T \in \mathbb{R}^{12\times1}$ and $\underline{\boldsymbol{\delta}}_s \sim 0.3\cdot\mathcal{U}(0,1)$ is uniformly distributed. The control bound $f_b$ is set at $3$ with the saturation factor $a=0.3$. The simulation lasts for $63$ seconds and the stochastic noise is set to 0 (only constant noise after this) at $t = 31.5s$. Accuracy is significantly improved with the UDE turned on. Even with Assumption \ref{assm: disturbance_assm} not satisfied, the noise estimation and payload trajectory can quickly converge to a bounded neighbourhood of the reference. After the stochastic noise is turned off, Assumption \ref{assm: disturbance_assm} is fully satisfied. The noise estimation error and payload tracking error converge to $0$, confirming the stability analysis of Theorem \ref{thm: stability_complete_system}. Therefore, our proposed control law can fulfill slung payload trajectory tracking under input saturation and external disturbances. Additional simulation results and the source codes are available in our GitHub repository\footnotemark[1].  

\section{CONCLUSIONS}\label{sec: con}
In this paper, we present a neural CCM design for robust multi-drone slung payload transportation systems. An extensive derivation of the dynamics, contraction metric, and disturbance estimation is provided. Stability and robustness are proved, with results illustrated by numerical simulations. Future work will focus on physical experiments and state constraints of the contraction metric.


\begin{thebibliography}{10}

\bibitem{qian2019TOIE}
Longhao Qian and Hugh~HT Liu.
\newblock Path-following control of a quadrotor uav with a cable-suspended payload under wind disturbances.
\newblock {\em IEEE Transactions on Industrial Electronics}, 67(3):2021--2029, 2019.

\bibitem{maza2009multi}
Ivan Maza, Konstantin Kondak, Markus Bernard, and An{\'\i}bal Ollero.
\newblock Multi-uav cooperation and control for load transportation and deployment.
\newblock In {\em Selected papers from the 2nd International Symposium on UAVs, Reno, Nevada, USA June 8--10, 2009}, pages 417--449. Springer, 2009.

\bibitem{lee2017geometric}
Taeyoung Lee.
\newblock Geometric control of quadrotor uavs transporting a cable-suspended rigid body.
\newblock {\em IEEE Transactions on Control Systems Technology}, 26(1):255--264, 2017.

\bibitem{li2021cooperative}
Guanrui Li, Rundong Ge, and Giuseppe Loianno.
\newblock Cooperative transportation of cable suspended payloads with mavs using monocular vision and inertial sensing.
\newblock {\em IEEE Robotics and Automation Letters}, 6(3):5316--5323, 2021.

\bibitem{cai2024robust}
Jiaming Cai and Bin Xian.
\newblock Robust hierarchical geometry control for the multiple uavs aerial transportation system with a suspended payload.
\newblock {\em Nonlinear Dynamics}, 112(6):4551--4571, 2024.

\bibitem{costantini2025cooperative}
Elia Costantini, Emanuele~L de~Angelis, and Fabrizio Giulietti.
\newblock Cooperative transportation using rotorcraft: swing state estimation and control.
\newblock {\em Aerospace Science and Technology}, page 110713, 2025.

\bibitem{wahba2024efficient}
Khaled Wahba and Wolfgang H{\"o}nig.
\newblock Efficient optimization-based cable force allocation for geometric control of a multirotor team transporting a payload.
\newblock {\em IEEE Robotics and Automation Letters}, 9(4):3688--3695, 2024.

\bibitem{zhang2021self}
Xiaozhen Zhang, Fan Zhang, Panfeng Huang, Jiale Gao, Hang Yu, Chongxu Pei, and Yizhai Zhang.
\newblock Self-triggered based coordinate control with low communication for tethered multi-uav collaborative transportation.
\newblock {\em IEEE Robotics and Automation Letters}, 6(2):1559--1566, 2021.

\bibitem{goodman2023geometric}
Jacob~R Goodman, Thomas Beckers, and Leonardo~J Colombo.
\newblock Geometric control for load transportation with quadrotor uavs by elastic cables.
\newblock {\em IEEE Transactions on Control Systems Technology}, 31(6):2848--2862, 2023.

\bibitem{zhao2023composite}
Kai Zhao and Jinhui Zhang.
\newblock Composite disturbance rejection control strategy for multi-quadrotor transportation system.
\newblock {\em IEEE Robotics and Automation Letters}, 8(8):4697--4704, 2023.

\bibitem{qian2019path}
Longhao Qian and Hugh~H Liu.
\newblock Path following control of multiple quadrotors carrying a rigid-body slung payload.
\newblock In {\em AIAA Scitech 2019 Forum}, page 1172, 2019.

\bibitem{qian2022robust}
Longhao Qian and Hugh~HT Liu.
\newblock Robust control study for tethered payload transportation using multiple quadrotors.
\newblock {\em Journal of Guidance, Control, and Dynamics}, 45(3):434--452, 2022.

\bibitem{gao2025robustness}
Tianhua Gao, Kohji Tomita, and Akiya Kamimura.
\newblock Robustness enhancement for multi-quadrotor centralized transportation system via online tuning and learning.
\newblock In {\em 2025 American Control Conference (ACC)}, pages 497--502. IEEE, 2025.

\bibitem{lohmiller1998contraction}
Winfried Lohmiller and Jean-Jacques~E Slotine.
\newblock On contraction analysis for non-linear systems.
\newblock {\em Automatica}, 34(6):683--696, 1998.

\bibitem{manchester2017control}
Ian~R Manchester and Jean-Jacques~E Slotine.
\newblock Control contraction metrics: Convex and intrinsic criteria for nonlinear feedback design.
\newblock {\em IEEE Transactions on Automatic Control}, 62(6):3046--3053, 2017.

\bibitem{sun2021learning}
Dawei Sun, Susmit Jha, and Chuchu Fan.
\newblock Learning certified control using contraction metric.
\newblock In {\em Conference on Robot Learning}, pages 1519--1539. PMLR, 2021.

\bibitem{li2025neural}
Haoyu Li, Xiangru Zhong, Bin Hu, and Huan Zhang.
\newblock Neural contraction metrics with formal guarantees for discrete-time nonlinear dynamical systems.
\newblock {\em arXiv preprint arXiv:2504.17102}, 2025.

\bibitem{manchester2018robust}
Ian~R Manchester and Jean-Jacques~E Slotine.
\newblock Robust control contraction metrics: A convex approach to nonlinear state-feedback ${H}^{\infty}$ control.
\newblock {\em IEEE Control Systems Letters}, 2(3):333--338, 2018.

\bibitem{tsukamoto2020robust}
Hiroyasu Tsukamoto and Soon-Jo Chung.
\newblock Robust controller design for stochastic nonlinear systems via convex optimization.
\newblock {\em IEEE Transactions on Automatic Control}, 66(10):4731--4746, 2020.

\bibitem{zhao2022tube}
Pan Zhao, Arun Lakshmanan, Kasey Ackerman, Aditya Gahlawat, Marco Pavone, and Naira Hovakimyan.
\newblock Tube-certified trajectory tracking for nonlinear systems with robust control contraction metrics.
\newblock {\em IEEE Robotics and Automation Letters}, 7(2):5528--5535, 2022.

\bibitem{lapandi2024meta}
D{\v{z}}enan Lapandi, Fengze Xie, Christos~K Verginis, Soon-Jo Chung, Dimos~V Dimarogonas, and Bo~Wahlberg.
\newblock Meta-learning augmented mpc for disturbance-aware motion planning and control of quadrotors.
\newblock {\em IEEE Control Systems Letters}, 2024.

\bibitem{jin2025enhanced}
Ao~Jin, Weijian Zhao, Yifeng Ma, Panfeng Huang, and Fan Zhang.
\newblock Enhanced robust tracking control: An online learning approach.
\newblock {\em arXiv preprint arXiv:2505.05036}, 2025.

\bibitem{tsukamoto2021contraction}
Hiroyasu Tsukamoto, Soon-Jo Chung, and Jean-Jaques~E Slotine.
\newblock Contraction theory for nonlinear stability analysis and learning-based control: A tutorial overview.
\newblock {\em Annual Reviews in Control}, 52:135--169, 2021.

\bibitem{tsukamoto2021theoretical}
Hiroyasu Tsukamoto, Soon-Jo Chung, Jean-Jacques Slotine, and Chuchu Fan.
\newblock A theoretical overview of neural contraction metrics for learning-based control with guaranteed stability.
\newblock In {\em 2021 60th IEEE Conference on Decision and Control (CDC)}, pages 2949--2954. IEEE, 2021.

\bibitem{dawson2023safe}
Charles Dawson, Sicun Gao, and Chuchu Fan.
\newblock Safe control with learned certificates: A survey of neural lyapunov, barrier, and contraction methods for robotics and control.
\newblock {\em IEEE Transactions on Robotics}, 39(3):1749--1767, 2023.

\bibitem{roza2014class}
Ashton Roza and Manfredi Maggiore.
\newblock A class of position controllers for underactuated vtol vehicles.
\newblock {\em IEEE Transactions on Automatic Control}, 59(9):2580--2585, 2014.

\end{thebibliography}

\addtolength{\textheight}{-12cm}   






\end{document}